\def\denseformat{
\setlength{\textheight}{9in}
\setlength{\textwidth}{6.9in}
\setlength{\evensidemargin}{-0.2in}
\setlength{\oddsidemargin}{-0.2in}
\setlength{\headsep}{10pt}
\setlength{\topmargin}{-0.3in}
\setlength{\columnsep}{0.375in}
\setlength{\itemsep}{0pt}
}
\newtheorem{theorem}{Theorem}[section]
\newtheorem{claim}[theorem]{Claim}
\newtheorem{lemma}[theorem]{Lemma}
\newtheorem{fact}[theorem]{Fact}
\def\boldhead#1:{\par\vskip 7pt\noindent{\bf #1:}\hskip 10pt}
\def\ithead#1:{\par\vskip 7pt\noindent{\it #1:}\hskip 10pt}
\def\inline#1:{\par\vskip 7pt\noindent{\bf #1:}\hskip 10pt}
\def\midinline#1:{\par\noindent{\bf #1:}\hskip 10pt}
\def\dnsinline#1:{\par\vskip -7pt\noindent{\bf #1:}\hskip 10pt}
\def\ddnsinline#1:{\newline{\bf #1:}\hskip 10pt}
\def\largeinline#1:{\par\vskip 7pt\noindent{\large\bf #1:}\hskip 10pt}
\long\def\comment #1\commentend{}
\long\def\commhide #1\commhideend{}
\long\def\commfull #1\commend{#1}
\long\def\commabs #1\commenda{}
\long\def\commtim #1\commendt{#1}
\long\def\commb #1\commbend{}
\long\def\commedit #1\commeditend{} 
\long\def\commB #1\commBend{}       
\long\def\commex #1\commexend{}     
\long\def\commsiena #1\commsienaend{}  
\long\def\commBI #1\commBIend{}  
\long\def\CProof #1\CQED{}
\def\qed{\mbox{}\hfill $\Box$\\}
\def\blackslug{\hbox{\hskip 1pt \vrule width 4pt height 8pt
    depth 1.5pt \hskip 1pt}}
\def\QED{\quad\blackslug\lower 8.5pt\null\par}
\long\def\PPP#1{\noindent{\bf Proof:}{ #1}{\quad\blackslug\lower 8.5pt\null}}
\long\def\denspar #1\densend
\newif\ifnotesw\noteswtrue
\ifnotesw\marginpar[\hfill\(\top\)]{\(\top\)}\fi}%
\ifnotesw\marginpar[\hfill\(\bot\)]{\(\bot\)}\fi}
\newcommand{\mnote}[1]%
    {\ifnotesw\marginpar%
        [{\scriptsize\it\begin{minipage}[t]{\marginparwidth}
        \raggedleft#1%
                        \end{minipage}}]%
        {\scriptsize\it\begin{minipage}[t]{\marginparwidth}
        \raggedright#1%
                        \end{minipage}}%
    \fi}
\def\tR{{\tilde R}}
\def\MathF{\hbox{\rm I\kern-2pt F}}
\def\MathP{\hbox{\rm I\kern-2pt P}}
\def\MathR{\hbox{\rm I\kern-2pt R}}
\def\MathZ{\hbox{\sf Z\kern-4pt Z}}
\def\MathN{\hbox{\rm I\kern-2pt I\kern-3.1pt N}}
\def\MathC{\hbox{\rm \kern0.7pt\raise0.8pt\hbox{\footnotesize I}
\kern-4.2pt C}}
\def\MathQ{\hbox{\rm I\kern-6pt Q}}
\newsavebox{\ttop}\newsavebox{\bbot}
\def\eps{\epsilon}
\begin{document}

\def\varrhol{{et al.~}}
\def\Sp{\mathit{Sp}}
\def\ReadDg{\mathit{Read\_Edge}}
\def\wmax{{{\hat \omega}}}
\def\ttl{\mathit{ttl}}
\def\Rnd{\mathit{Round}}
\def\deg{\mathit{deg}}
\def\ctr{\mathit{ctr}}
\def\ScDgs{\mathit{Scanned\_Edges}}
\def\SCD{\mathit{SCANNED}}
\def\NOTSCD{\mathit{NOTSCANNED}}
\def\CRASH{\mathit{CRASH}}
\def\CR{\mathit{CRASH}}
\def\CRASHED{\mathit{CRASHED}}
\def\SyncIncr{\mathit{Sync\_Incr}}
\def\actdeg{\mathit{actdeg}}
\def\OLD{\mathit{OLD}}
\def\NEW{\mathit{NEW}}
\def\mark{\mathit{mark}}
\def\AsyncRnd{\mathit{Async\_Rnd}}
\def\status{\mathit{status}}
\def\scstat{\mathit{scan\_status}}
\def\lab{\mathit{label}}
\def\seclabel{\mathit{sec\_label}}
\def\own{\mathit{own}}
\def\SELF{\mathit{SELF}}
\def\PEER{\mathit{PEER}}
\def\DynRnd{\mathit{Dyn\_Rnd}}
\def\Delete{\mathit{Delete}}
\def\XReplace{\mathit{XReplace}}
\def\UpdLab{\mathit{Update\_Label}}
\def\Crash{\mathit{Crash}}
\def\CrashLoop{\mathit{CrashLoop}}
\def\CrashItern{\mathit{CrashItern}}
\def\fetched{\mathit{fetched}}
\def\TRUE{\mathit{TRUE}}
\def\FALSE{\mathit{FALSE}}
\def\crashstat{\mathit{crash\_status}}
\def\wmax{\hat{\omega}}
\def\END{\mathit{END}}
\def\te{\tilde{e}}
\def\tu{\tilde{u}}
\def\che{\check{e}}
\def\chu{\check{u}}
\def\Old{\mathit{Old}}
\def\tR{\tilde{R}}
\def\eps{{\epsilon}}
\def\UpLab{\mathit{Update\_Label}}
\def\chP{\check{P}}
\def\VOID{\mathit{VOID}}

\newcommand {\ignore} [1] {}

\title{The MST of Symmetric Disk Graphs \\(in Arbitrary Metrics) is Light}

\author{
Shay Solomon\thanks{
        Department of Computer Science,
        Ben-Gurion University of the Negev, POB 653, Beer-Sheva 84105, Israel.
         \newline E-mail: {\tt \{shayso\}@cs.bgu.ac.il}
        \newline This research has been supported by the
                 Clore Fellowship grant No.\ 81265410  and
 by the BSF grant No. 2008430.
        \newline Partially supported by the
       Lynn and William Frankel
        Center for Computer Sciences.}}
\date{\empty}

\date{\empty}

\begin{titlepage}
\def\thepage{}
\maketitle

\begin{abstract}
Consider an $n$-point metric $M = (V,\delta)$,
and a transmission range assignment $r: V \rightarrow \mathbb R^+$
that maps each point $v \in V$ to the disk of radius $r(v)$ around it.
The \emph{symmetric disk graph} (henceforth, SDG) that corresponds to $M$ and $r$
is the undirected graph over $V$ whose edge set includes
an edge $(u,v)$ if both $r(u)$ and $r(v)$ are no smaller than $\delta(u,v)$.
SDGs are often used to model wireless communication networks.

Abu-Affash, Aschner, Carmi and Katz (SWAT 2010, \cite{AACK10}) 
showed that for any \emph{2-dimensional Euclidean} $n$-point metric $M$, the weight of the MST of every \emph{connected} SDG for $M$
is $O(\log n) \cdot w(MST(M))$, and that this bound is tight.
However, the upper bound proof of \cite{AACK10} relies heavily on basic geometric properties of 2-dimensional Euclidean metrics,
and does not extend to higher dimensions.
A natural question that arises is whether this surprising upper bound of 
\cite{AACK10} can be generalized for wider families of metrics, such as 3-dimensional Euclidean metrics.

In this paper we generalize the upper bound of Abu-Affash et al.\ \cite{AACK10} for Euclidean metrics
of any dimension. Furthermore, our upper bound 
extends to \emph{arbitrary metrics} and, in particular, it applies to any of the normed spaces $\ell_p$.
Specifically, we demonstrate that for \emph{any} $n$-point metric $M$,
the weight of the MST of every connected SDG for $M$
is $O(\log n) \cdot w(MST(M))$. 
\end{abstract}
\end{titlepage}

\pagenumbering {arabic} 

\section{Introduction}
\vspace{-0.04in}
{\bf 1.1~ The MST of Symmetric Disk Graphs.~}
Consider a network that is represented as an (undirected) weighted graph $G = (V,E,w)$,
and assume that we want to compute a spanning tree for $G$ of \emph{small weight},
i.e., of weight that is close to the weight $w(MST(G))$ of the minimum spanning tree (MST) for $G$.
(See Section 1.6 for the definition of weight.)
However, due to some physical constraints (e.g., network faults) we are 
only given a connected spanning subgraph $G'$ of $G$, rather than $G$ itself. 
In this situation it is natural to use the MST of
the given subgraph $G'$.
The \emph{weight-coefficient} of $G'$ with respect to $G$
is defined as the ratio between $w(MST(G'))$ and $w(MST(G))$.
If the weight-coefficient of $G'$ is small enough, we can use $MST(G')$ as 
a spanning tree for $G$ of small weight. 

The problem of computing spanning trees of small weight (especially the MST) is a fundamental one in
Computer Science \cite{KRY94,KKT95,Chazelle00,PR02,Elkin06,CS09},
and the above scenario arises naturally in many practical contexts (see, e.g., \cite{Salowe91,DF94,ZSN02,Li03,LWWF04,LWS04,DPP06,DPP062}).
In particular, this scenario is motivated 
by wireless network design.

In this paper we focus on the \emph{symmetric disk graph model} in wireless communication networks,
which 
has been subject to considerable research.
(See \cite{HK01,EJS01,FFF04,LL06,TD06,CFKP07,TWLZD07,AACK10}, and the references therein.) 
Let $M = (V,\delta)$ be an $n$-point metric that is represented as 
a complete weighted graph $G(M) = (V, {V \choose 2}, w)$
in which the weight $w(e)$ of each edge $e = (u,v)$  is equal to $\delta(u,v)$.
Also, let $r: V \rightarrow \mathbb R^+$ be a transmission range assignment
that maps each point $v \in V$ to the disk of radius $r(v)$ around it.
The \emph{symmetric disk graph} (henceforth, SDG) that corresponds to $M$ and $r$,
denoted $SDG(M,r)$,
\footnote{The definition of symmetric disk graph can be generalized in the obvious way for any weighted graph.
Specifically, the \emph{symmetric disk graph} $SDG(G,r)$ that corresponds to a weighted graph $G = (V,E,w)$ and a range assignment $r$ is
the undirected spanning subgraph of $G$ whose edge set includes an edge $e=(u,v) \in E$  if both $r(u)$ and $r(v)$ are no smaller than $w(e)$.}
is the undirected spanning subgraph of $G(M)$ whose edge set includes
an edge $e=(u,v)$ if both $r(u)$ and $r(v)$ are no smaller than $w(e)$. 
Under the symmetric disk graph model we cannot use all the edges of $G(M)$,    
but rather only those that are present in $SDG(M,r)$.
Clearly, if $r(v) \ge diam(M)$\footnote{The \emph{diameter} of a metric $M$, denoted $diam(M)$, is defined as the largest pairwise distance in $M$.} for each point $v \in V$, then $SDG(M,r)$ is simply the complete graph $G(M)$. 
However, the transmission ranges are usually significantly shorter than $diam(M)$, and many edges that belong to $G(M)$
may not be present in $SDG(M,r)$. 
Therefore, it is generally impossible to use the MST of $M$ under the symmetric disk graph model,
simply because some of the edges of $MST(M)$ are not present in $SDG(M,r)$ and thus cannot be accessed.
Instead, assuming the weight-coefficient of 
$SDG(M,r)$ with respect to $M$ is small enough, we can use $MST(SDG(M,r))$ as a spanning tree for $M$ of small weight.

Abu-Affash et al.\ \cite{AACK10}
showed that for any \emph{2-dimensional Euclidean} $n$-point metric $M$, the weight of the MST of every \emph{connected} SDG for $M$
is $O(\log n) \cdot w(MST(M))$. In other words, they proved that
for any 2-dimensional Euclidean $n$-point metric,
the weight-coefficient of every \emph{connected} SDG 
 is $O(\log n)$.
In addition, Abu-Affash et al.\ \cite{AACK10}  provided a matching lower bound of $\Omega(\log n)$ on the weight-coefficient
of connected SDGs
that applies to a basic 1-dimensional Euclidean metric.
Notably, the upper bound proof of \cite{AACK10} relies heavily on basic geometric properties of 2-dimensional Euclidean metrics,
and does not extend to higher dimensions.
A natural question that arises is whether the logarithmic upper bound of 
\cite{AACK10} on the weight-coefficient of connected SDGs can be generalized for wider families of metrics, such as 3-dimensional Euclidean metrics.

In this paper we generalize the upper bound of Abu-Affash et al.\ \cite{AACK10} 
for Euclidean metrics of any dimension. Furthermore, our upper bound extends
to \emph{arbitrary metrics} and, in particular, it applies to any of the normed spaces $\ell_p$.
Specifically, we demonstrate that for \emph{any} $n$-point metric $M$,
every connected SDG has weight-coefficient $O(\log n)$.
In fact, our upper bound is even more general,
applying to unconnected SDGs as well. That is, we show that the weight
of the minimum spanning forest (MSF) of every (possibly unconnected) SDG for $M$
is $O(\log n) \cdot w(MST(M))$.

The fact that the 
weight-coefficient of SDGs for arbitrary metrics
is relatively small
is quite surprising. 
In particular, we demonstrate that for other basic parameters of spanning trees,
the situation 
is fundamentally different.
Consider, for example, the maximum degree (henceforth, degree) parameter. Clearly, for any metric there
is a spanning tree with degree 2. On the other hand, consider an $n$-point metric $M^*$ in which the distance between
a designated point $rt \in M^*$ and every other point is equal to 1, and all other distances
are equal to 2. The SDG corresponding to
$M^*$ and the range assignment $r \equiv 1$ that maps each point to the unit disk around it is the $n$-star graph rooted
at $rt$, having degree $n-1$. Thus, the \emph{degree-coefficient}
of connected SDGs for metrics
can be as large as $\Omega(n)$ in general, which is \emph{exponentially larger} than the weight-coefficient. (See Section \ref{appb} for the formal definition
of degree-coefficient.)
We show that the same lower bound of $\Omega(n)$ also applies to other parameters of spanning trees, including radius and depth,
 diameter and hop-diameter,
sum of all pairwise distances, and sum of all distances from a designated vertex. 

Finally, we remark that our logarithmic upper bound on the weight-coefficient of SDGs
does not extend to general (undirected) weighted graphs.
Indeed, if the weight function of the graph does not satisfy the triangle inequality,
the weight-coefficient of SDGs can be arbitrarily large even for complete  graphs.
Moreover, we demonstrate that there are (non-complete) 1-dimensional Euclidean $n$-vertex graphs\footnote{A \emph{1-dimensional Euclidean 
graph} is a weighted graph in which the vertices represent points on a line, and the weight of each edge is
equal to the Euclidean distance between its endpoints.} for which the weight-coefficient of SDGs can be
as large as $\Omega(n)$. 
\vspace{0.06in}\\
{\bf 1.2~ The Range Assignment Problem.~}
Given a network 
$G = (V,E,w)$,
a \emph{range assignment} for $G$ is an assignment of transmission ranges to each of the vertices of $G$.
A range assignment is called \emph{complete} if the induced (directed) communication
graph is strongly connected. In the \emph{range assignment problem} the objective is to find a complete
range assignment for which the total power consumption (henceforth, cost) is minimized. The power consumed by a vertex
$v \in V$ is $r(v)^\alpha$, where $r(v)$ is the range assigned to $v$ and $\alpha \ge 1$ is some constant. 
Thus the cost of the range assignment is given by $\sum_{v \in V}r(v)^{\alpha}$.
The
range assignment problem was first studied by Kirousis et al.\ \cite{KKKP00},
who proved that the problem is NP-hard in 3-dimensional Euclidean metrics,
assuming $\alpha = 2$, and also presented a simple 2-approximation algorithm.     
Subsequently, Clementi et al.\ \cite{CPS99} proved that the problem
remains NP-hard in 2-dimensional Euclidean metrics.

We believe that it is more realistic to study the range assignment problem under the symmetric disk graph model.
Specifically, the potential transmission range of a vertex $v$ is bounded by some maximum range $r(v)$, and any two 
vertices $u,v$ can directly communicate with each other if and only if $v$ lies within the range assigned to $u$
and vice versa. Blough et al.\ \cite{BLRS02} showed that this version of the range assignment problem
is also NP-hard in 2-dimensional and 3-dimensional Euclidean metrics.
Also, Calinescu et al.\ \cite{CMZ02} devised a $(1+\frac{1}{2}\ln 3 + \epsilon)$-approximation scheme and a more practical
$(\frac{15}{8})$-approximation algorithm. Abu-Affash et al.\ \cite{AACK10} showed that, assuming $\alpha = 1$,
the cost of an optimal range assignment with bounds on the ranges is greater by at most a logarithmic factor
than the cost of an optimal range assignment without such bounds. 
This result of Abu-Affash et al.\ \cite{AACK10} is a simple corollary of their upper bound 
on the weight-coefficient of SDGs for 2-dimensional Euclidean metrics.
Consequently, this result of \cite{AACK10} for the range assignment problem  holds only in 2-dimensional Euclidean metrics.
By applying our generalized upper bound on the weight-coefficient of SDGs, we extend this result of Abu-Affash et al.\ \cite{AACK10}
to arbitrary metrics.
\vspace{0.06in}\\
{\bf 1.3~ Proof Overview.~}
As was mentioned above, the upper bound proof of \cite{AACK10} is very specific, and relies heavily on 
basic geometric properties of 2-dimensional Euclidean metrics. Hence, it does not apply to 3-dimensional Euclidean metrics,
let alone to arbitrary metrics. Our upper bound proof is based on completely different principles.
In particular, it is independent of the geometry of the metric and applies to every complete graph whose weight function
satisfies the triangle inequality.
In fact, at the heart of our proof 
is a lemma that applies to an even wider family of graphs, namely, the family of all traceable\footnote{A graph is called
\emph{traceable} if it contains a Hamiltonian path.} weighted graphs.
Specifically, let $S$ and $H$ be an SDG and a minimum-weight Hamiltonian path of some traceable weighted $n$-vertex graph $G$, respectively,
and let $F$ be the MSF of $S$. Our lemma states that there is a set $\tilde E$ of edges in $F$ of weight at most $w(H)$,
such that the graph $F \setminus \tilde E$ obtained by removing all edges of $\tilde E$ from $F$ contains at least $\frac{1}{5} \cdot n$ isolated vertices. 
The proof of this lemma is based
on a delicate combinatorial argument that does not assume either that the graph $G$ is complete or that its weight function 
satisfies the triangle inequality.
We believe that this lemma is of independent interest. (See Lemma \ref{key} in Section \ref{sec3}.)
By employing this lemma inductively, we are able to show that the weight of $F$ is bounded above by $\log_{\frac{5}{4}} n \cdot w(H)$,
which, by the triangle inequality, yields an upper bound of $2 \cdot \log_{\frac{5}{4}} n$ on the weight-coefficient of $S$
 with respect to $G$.
Interestingly, our upper bound of $2 \cdot \log_{\frac{5}{4}} n$ on the weight-coefficient of SDGs for arbitrary metrics
improves  the corresponding upper bound of 
\cite{AACK10}  (namely, $90 \cdot \log_{\frac{5}{4}} n + 1$), which holds only in 2-dimensional Euclidean metrics, 
by a multiplicative factor of 45.
\vspace{0.06in}
\\
{\bf 1.4~ Related Work on Disk Graphs.~}
The symmetric disk graph model is a generalization
of the extremely well-studied \emph{unit disk graph model} (see, e.g., \cite{CCJ90,Li03,KMPS04,LWS04,vanL05}). 
The \emph{unit disk graph} of a metric $M$, denoted $UDG(M)$, is the 
symmetric disk graph corresponding
to $M$ and the range assignment $r \equiv 1$ that maps each point to the unit disk around it.
(It is usually assumed that $M$ is a 2-dimensional Euclidean metric.)
It is easy to see that in the case when $UDG(M)$ is connected, all edges of $MST(M)$ belong to $UDG(M)$,
and so $MST(UDG(M)) = MST(M)$. 
Hence the weight-coefficient of connected unit disk graphs for arbitrary metrics 
is equal to 1. In the general case, we note that all edges of $MSF(UDG(M))$ belong to $MST(M)$, and so the weight-coefficient
of (possibly unconnected) unit disk graphs for arbitrary metrics is at most 1.

Another model that has received much attention in the literature
is the \emph{asymmetric disk graph model} (see, e.g., \cite{KMPS04,TTD08,PR10,PR102,AACK10}).
The \emph{asymmetric disk graph} 
corresponding to a metric $M = (V,\delta)$ and a range assignment $r$
is the directed graph over $V$, where there is an arc of weight $\delta(u,v)$ from $u$ to $v$ if $r(u) \ge \delta(u,v)$.
On the negative side, Abu-Affash et al.\ \cite{AACK10} provided a lower bound of $\Omega(n)$ on the weight-coefficient
of strongly connected asymmetric disk graphs that applies to a 1-dimensional Euclidean $n$-point metric. 
However, asymmetric communication models are generally considered to be
impractical, because in such models many
communication primitives become unacceptably complicated  \cite{Parkash99,Watten05}.
In particular, the asymmetric disk graph model is often viewed as 
less realistic than the symmetric disk graph model,
where, as was mentioned above,
we obtain
a logarithmic upper bound on the weight-coefficient for arbitrary metrics.
\vspace{0.06in}\\
{\bf 1.5~ Structure of the Paper.~}
The main result of this paper is given in Section \ref{sec3}. 
Therein we obtain a logarithmic upper bound on the weight-coefficient of SDGs for arbitrary metrics.
In Section \ref{sec5} we provide an application of this upper bound 
to the range assignment problem.
%
Finally, Section \ref{sec4} is devoted to negative results concerning two possible extensions of the upper bound 
of Section \ref{sec3}.
We start (Section \ref{appa}) with showing that this 
upper bound 
does not extend to general graphs, and proceed (Section \ref{appb})
showing that in comparison to the weight parameter, 
other basic parameters of spanning trees  incur significantly larger bounds.    
\vspace{0.06in}\\
{\bf 1.6~ Preliminaries.~}
Given a (possibly weighted) graph $G$, its vertex set (respectively, edge set) is denoted by $V(G)$ (resp., $E(G)$).
For an edge set $E' \subseteq E(G)$, we denote by $G \setminus E'$ the graph
obtained by removing all edges of $E'$ from $G$. Similarly, for an edge set $E''$ over the vertex set $V(G)$, we
denote by $G \cup E''$ the graph obtained by adding all edges of $E''$ to $G$.
The weight of an edge $e$ in $G$ is denoted by $w(e)$. 
For an edge set $E \subseteq E(G)$, its weight $w(E)$ is defined as the sum of all edge weights in $E$, i.e., $w(E)
= \sum_{e \in E} w(e)$. The weight of $G$ is defined as the weight of its edge set $E(G)$, namely,
$w(G) = w(E(G))$.
Finally, for a positive integer $n$, we denote the set $\{1, 2,\ldots , n\}$ by  $[n]$.

\section{The MST of SDGs is Light} \label{sec3}
In this section we prove that the weight-coefficient of SDGs for arbitrary $n$-point metrics is $O(\log n)$.

We will use the following well-known fact in the sequel.  
\begin{fact} \label{maxweight}
Let $G$ be a weighted graph in which ell edge weights are distinct. Then $G$ has a unique MSF,
and the edge of maximum weight in every cycle of $G$ does not belong to the MSF of $G$.   
\end{fact}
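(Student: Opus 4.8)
\textbf{Proof plan for Fact~\ref{maxweight}.}
The statement is a standard fact about minimum spanning forests under the assumption of distinct edge weights, and I would prove both parts together by working component by component. First I would reduce to the connected case: the MSF of $G$ is the disjoint union of the MSTs of the connected components of $G$, and every cycle lies entirely within a single component. Hence it suffices to prove uniqueness of the MST and the cycle property for a connected weighted graph $G$ whose edge weights are all distinct.

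For \emph{uniqueness}, the plan is to argue by contradiction. Suppose $T_1$ and $T_2$ are two distinct MSTs. Let $e$ be the minimum-weight edge lying in exactly one of them, say $e \in T_1 \setminus T_2$. Adding $e$ to $T_2$ creates a unique cycle $C$, and since $T_1$ is acyclic, $C$ must contain some edge $f \notin T_1$; in particular $f \neq e$, so $f \in T_2 \setminus T_1$. Because all weights are distinct and $e$ is the minimum-weight edge on which $T_1$ and $T_2$ differ, while $f$ is an edge on which they differ, we get $w(f) > w(e)$. Then $T_2 - f + e$ is a spanning tree of strictly smaller weight than $T_2$, contradicting minimality. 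This forces $T_1 = T_2$.

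For the \emph{cycle property}, let $C$ be any cycle of $G$ and let $e$ be its maximum-weight edge; I want to show $e \notin MST(G)$. Again argue by contradiction: if $e \in T := MST(G)$, removing $e$ from $T$ splits $T$ into two components with vertex sets $A$ and $B$, where $e$ has one endpoint in each. Since the endpoints of $e$ are joined by the rest of the cycle $C$, that path must cross from $A$ to $B$, so $C$ contains another edge $f \neq e$ with one endpoint in $A$ and one in $B$. By the choice of $e$ as the maximum-weight edge of $C$ and distinctness of weights, $w(f) < w(e)$. Then $T - e + f$ reconnects $A$ and $B$ into a spanning tree of strictly smaller weight, contradicting the minimality of $T$. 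Hence $e \notin T$.

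I do not expect a genuine obstacle here, since this is classical; the only point requiring mild care is the bookkeeping that reduces the forest statement to the tree statement (cycles and the relevant swaps stay inside a single component, so distinctness of weights is inherited). The distinctness hypothesis is exactly what upgrades the usual weak exchange inequalities to strict ones, which is what both delivers uniqueness and pins down the maximum-weight edge as the unique edge excluded from the cycle.
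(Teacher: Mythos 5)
Your proof is correct and follows essentially the same route as the paper: the paper treats this as a well-known fact and its (omitted) argument is the same exchange/swap on the cycle, performed directly inside the tree of the forest containing the cycle rather than after your explicit reduction to connected components. The only addition on your side is that you also spell out the uniqueness argument via the minimum-weight edge of the symmetric difference, which the paper simply takes for granted.
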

\ignore{
\begin{proof}
Denote by $E^*(C)$ the set of edges of maximum weight in $C$, and suppose for contradiction
that all edges of $E^*(C)$ belong to $F$. Consider the tree $T_C$ of $F$ that contains the vertex
set of the cycle $C$, and note that all edges of $E^*(C)$ belong to $T_C$, i.e., $E^*(C) \subseteq E(T_C)$. 
Removing an arbitrary edge $e^*$ of $E^*(C)$ decomposes
the tree $T_C$ into two subtrees. Since the remainder $C \setminus \{e^*\}$ of $C$ reconnects these two subtrees, 
there must be an edge $e$ of $C$ that does not belong to $T_C$, with endpoints in different subtrees.
Adding this edge reconnects the two subtrees, thus forming a new spanning forest $F' = F \setminus \{e^*\} \cup \{e\}$ of $G$.
However, since $e$ does not belong to $T_C$ and $E^*(C) \subseteq E(T_C)$, it follows that $w(e) < w(e^*)$.
We conclude that the weight $w(F')$ of the new spanning forest $F'$ of $G$ satisfies $w(F') = w(F) - w(e^*) + w(e) < w(F)$,
a contradiction. \qed
\end{proof}
}

In what follows we assume for simplicity that all the distances in any metric 
are distinct.
This assumption does not lose generality, since any ties
can be broken using, e.g., lexicographic rules. We may henceforth assume 
that there is a unique MST for any metric, and a unique MSF for every SDG of any metric.


The following lemma is central in our upper bound proof.
\begin{lemma} \label{key}
Let $M=(V,\delta)$ be an $n$-point metric and let $r: V \rightarrow \mathbb R^+$ be a range assignment.
Also, let $F = (V,E_F)$ be the MSF of the symmetric disk graph $S = SDG(M,r)$
and let $H = (V,E_{H})$ be a minimum-weight Hamiltonian path of $M$.
Then there is an edge set $\tilde E \subseteq E_F$
of weight at most $w(H)$, such that the graph $F \setminus \tilde E$ 
contains at least $\frac{1}{5} \cdot n$
isolated vertices.
\end{lemma}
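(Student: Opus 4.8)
\textbf{The plan} is to exploit the structure of the minimum-weight Hamiltonian path $H$ together with the key property of symmetric disk graphs: an edge $(u,v)$ is present in $S$ iff \emph{both} ranges $r(u)$ and $r(v)$ are at least $\delta(u,v)$. I would orient the path $H$, giving each vertex $v$ (except the two endpoints) a successor and a predecessor along $H$, and partition the vertices according to which of their two $H$-neighbors is \emph{closer}. The intuition is that a vertex whose range is large enough to reach its near $H$-neighbor contributes a short, charged edge, whereas a vertex whose range cannot even reach its near $H$-neighbor must be nearly isolated in $F$. The charged edges will form $\tilde E$, and the nearly-isolated vertices will furnish the $\frac{1}{5} \cdot n$ isolated vertices after $\tilde E$ is deleted.

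\textbf{First} I would set up a charging scheme. For each vertex $v$, let $h(v)$ denote the weight of the shorter of the (at most two) edges of $H$ incident to $v$; summing $h(v)$ over all $v$ double-counts each $H$-edge at most once per endpoint, and a careful accounting should give $\sum_v h(v) \le 2 \cdot w(H)$ or better. The idea is that an edge of $F$ incident to $v$ can only exist if $r(v) \ge w(e)$, so $r(v)$ controls $v$'s incident $F$-edges. I would classify $v$ as \emph{rich} if $r(v) \ge h(v)$ (its range reaches its near $H$-neighbor) and \emph{poor} otherwise. For a poor vertex, \emph{every} incident $F$-edge $e$ satisfies $w(e) > r(v) \ge$ nothing directly — so I must instead argue that a poor vertex can be incident to at most one $F$-edge of small weight, because any $F$-edge at $v$ has weight $\le r(v) < h(v)$, and there are few metric neighbors of $v$ within distance $h(v)$. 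This is where the combinatorial heart lies.

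\textbf{The construction of $\tilde E$} would then proceed by selecting, for each vertex that needs to be isolated, a single incident $F$-edge to delete, and bounding the total weight of these deleted edges by $w(H)$ using the shorter-$H$-edge weights $h(v)$ as the charge. Since $F$ is a forest, each vertex has controlled degree, and I expect to argue that deleting one carefully chosen edge per vertex in a suitable large subset leaves those vertices isolated while the deleted edges have total weight at most $w(H)$. The factor $\frac{1}{5}$ strongly suggests a counting argument in which vertices are grouped into blocks of size $5$ (or a density argument yielding one isolated vertex per five), perhaps arising from bounding the number of vertices that can simultaneously be rich endpoints or from the forest's edge-to-vertex ratio combined with the two-sided nature of each deletion.

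\textbf{The main obstacle} I anticipate is controlling, for a single vertex $v$, how many $F$-edges are incident to it and which one to charge to the $H$-edge at $v$: because $F$ is a forest but can have high-degree vertices (the star example $M^*$ in the introduction shows degree can be $\Omega(n)$), I cannot naively delete all incident edges. The delicate point is to show that \emph{one} deletion per vertex suffices to isolate a constant fraction of vertices, which means the argument must be about the \emph{global} interaction between $F$ and $H$ — likely a clever pairing or a token/discharging argument showing that the edges of $F$ that are \emph{not} chargeable to $H$ can be removed cheaply and still disconnect enough vertices. I would expect the symmetric (two-range) condition to be precisely what lets me bound, for each surviving $F$-edge, the weight against an $H$-edge incident to one of its endpoints, and the $\frac{1}{5}$ to emerge from ensuring no $H$-edge is overcharged.
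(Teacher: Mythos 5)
Your write-up is a plan rather than a proof, and the step you yourself identify as ``the combinatorial heart'' is exactly the step that is missing; moreover, the route you sketch toward it would fail. You argue that a \emph{poor} vertex $v$ (one with $r(v) < h(v)$) can be incident to at most one cheap $F$-edge ``because there are few metric neighbors of $v$ within distance $h(v)$.'' That is a packing argument, and it is false in arbitrary metrics: the star metric $M^*$ from the paper's own introduction has $n-1$ points at distance exactly $1$ from the center, so a poor vertex can have arbitrarily many incident $F$-edges, all of weight at most $r(v)$. A second quantitative problem is your charge $\sum_v h(v) \le 2\cdot w(H)$: an $H$-edge can be the shorter incident edge of both its endpoints, so this bound cannot in general be improved to $w(H)$, whereas the lemma requires $w(\tilde E) \le w(H)$ exactly (the factor $\log_{5/4} n$ in the inductive step depends on this). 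Finally, your scheme has no mechanism for $H$-edges that \emph{are} present in $S$ but are not in $F$; for these, both endpoints may be ``rich,'' yet the edge gives you no isolated vertex and no obvious $F$-edge to charge against it.

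The paper's proof works with $H$-edges rather than vertices and splits them three ways. Edges of $H$ in both $S$ and $F$ are simply deleted from $F$ and paid for by themselves. Edges of $H$ in $S$ but not in $F$ are handled by an exchange argument: adding such an edge $e'_i$ to the current forest creates a unique cycle, Fact 2.1 (the heaviest edge of a cycle is not in the MSF) shows the heaviest cycle edge lies among the inserted $H$-edges, and therefore one can delete a non-$H$ edge $\tilde e_i$ of the cycle with $w(\tilde e_i) \le w(e'_i)$. Edges of $H$ not in $S$ are where the symmetric-range condition enters: for such $e$, the endpoint $\min(e)$ of smaller radius satisfies $r(\min(e)) < w(e)$, so \emph{every} $F$-edge still incident to $\min(e)$ is cheaper than $e$; one deletes a single such incident edge per $H$-edge (paid for by that $H$-edge), and every $H$-edge not consumed by this process certifies that $\min(e)$ is already isolated. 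The constant $\tfrac15$ then comes not from blocks of five but from a dichotomy: either the pruned forest has fewer than $\tfrac25 n$ edges (so at least $\tfrac15 n$ isolated vertices trivially), or a counting identity shows at least that many $H$-edges survive, each certifying an isolated vertex that is counted at most twice because $H$ has maximum degree two. Your instinct that the min-radius endpoint of an unrealizable $H$-edge is the source of isolated vertices is the right one, but without the exchange argument for the in-$S$-not-in-$F$ edges and without the one-deletion-per-$H$-edge bookkeeping, the weight bound and the $\tfrac15 n$ count do not follow.
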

{\bf Remark:} This statement remains valid if instead of the metric $M$ we take a general traceable weighted graph.
\begin{proof}
Denote by $E'$ the set of edges in $H$ that belong to the SDG $S$, i.e., $E' = E_{H} \cap E(S)$,
and let $E'' = E_{H} \setminus E'$ be the complementary edge set of $E'$ in $E_H$.
Also, denote by $E'_1$ the set of edges in $E'$ that belong to the MSF $F$, i.e., $E'_1 = E' \cap E_F$, and let
$E'_2 = E' \setminus E'_1$ be the complementary edge set of $E'_1$ in $E'$. Note that
(1) $E' \subseteq E(S)$, (2) $E'' \cap E(S) = \emptyset$, (3) $E'_1 \subseteq E_F$,
and (4) $E'_2 \cap E_F = \emptyset$.


Write $F_0 = F, k = |E'_2|$, and let $e'_1,e'_2,\ldots,e'_k$ denote the edges of $E'_2$
by increasing order of weight. Next, we construct $k$ spanning forests $F_1,F_2,\ldots,F_k$
of $S$ in the following iterative process.
For each index $i = 1,2,\ldots,k$, the graph $F_{i-1} \cup \{e'_{i}\}$ obtained from $F_{i-1}$
by adding to it the edge $e'_{i}$ contains a unique cycle $C_i$. 
Since $H$ is cycle-free, at least one edge of $C_i$
does not belong to $H$.  Let $\tilde e_{i}$ be an arbitrary such edge,
and denote by $F_{i} = F_{i-1} \cup \{e'_{i}\} \setminus \{\tilde e_{i}\}$ the graph  obtained from $F_{i-1}$
by adding to it the edge $e'_{i}$ and removing the edge $\tilde e_{i}$. 
\\It is easy to verify that the cycles $C_1,C_2,\ldots,C_k$ that are identified during
this process are subgraphs of the symmetric disk graph $S$. Moreover, for each index $i \in [k]$,
we have $E(C_i) \subseteq E_F \cup E'_2$,
 yielding $\tilde e_i \in E_F \setminus E_H$.
\\Write $\tilde E_2 = \{\tilde e_1,\tilde e_2,\ldots, \tilde e_k\}$,
and notice
that $|\tilde E_2| = |E'_2|, \tilde E_2 \subseteq E_F\setminus E_H$.
\\The following claim implies that $w(\tilde E_2) \le w(E'_2)$.
\begin{claim} \label{firstc}
For each index $i \in [k]$, $w(\tilde e_i) \le w(e'_i)$.
\end{claim}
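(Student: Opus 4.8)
The plan is to apply Fact~\ref{maxweight} to the cycle $C_i$, viewed as a cycle of the symmetric disk graph $S$ whose (unique) MSF is $F$. First I would record which edges of $C_i$ do and do not belong to $F$. By construction $E(C_i) \subseteq E_F \cup E'_2$, and since $E'_2 \cap E_F = \emptyset$, the edges of $C_i$ split cleanly: those in $E_F$ belong to $F$, while those in $E'_2$ do not. Moreover, because every edge removed during the process lies in $E_F \setminus E_H$ whereas the previously added edges $e'_1,\ldots,e'_{i-1}$ lie in $E_H$, no added edge is ever deleted; hence $F_{i-1}$ retains all of $e'_1,\ldots,e'_{i-1}$, and the $E'_2$-edges appearing on $C_i$ are therefore drawn from $\{e'_1,\ldots,e'_i\}$.

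Next I would bound the weight of the heaviest edge of $C_i$. Since the edges $e'_1,\ldots,e'_k$ were ordered by increasing weight, every $E'_2$-edge lying on $C_i$ has index at most $i$ and thus weight at most $w(e'_i)$. By Fact~\ref{maxweight}, the maximum-weight edge of the cycle $C_i$ in $S$ does not belong to the MSF $F$, so it must be one of these $E'_2$-edges; consequently the heaviest edge of $C_i$ has weight at most $w(e'_i)$. (In fact it is precisely $e'_i$, since $e'_i$ is the heaviest $E'_2$-edge on the cycle, but the weaker bound already suffices.)

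Finally, I would conclude by recalling that $\tilde e_i \in E_F \setminus E_H$, so $\tilde e_i$ does belong to $F$ and hence is not the maximum-weight edge of $C_i$. Therefore $w(\tilde e_i)$ is strictly smaller than the maximum edge weight on $C_i$, which we just bounded by $w(e'_i)$; this yields $w(\tilde e_i) \le w(e'_i)$, as claimed.

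I expect the only delicate point to be the bookkeeping of the first paragraph: one must verify that no added edge $e'_j$ is later removed, so that the non-$F$ edges of $C_i$ are exactly the low-index edges $e'_j$ with $j \le i$, and that $C_i$ genuinely lives inside $S$, so that Fact~\ref{maxweight} applies with $F$ as the relevant MSF. Once these two facts are secured, the increasing-weight ordering and Fact~\ref{maxweight} combine immediately to give the bound.
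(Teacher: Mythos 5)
Your proposal is correct and follows essentially the same route as the paper: both arguments observe that $C_i$ is a cycle of $S$ whose non-$F$ edges all lie in $\{e'_1,\ldots,e'_i\}$, invoke Fact~\ref{maxweight} to place the maximum-weight edge of $C_i$ among these (hence of weight at most $w(e'_i)$), and conclude since $\tilde e_i$ lies on $C_i$. The only difference is that you spell out the bookkeeping showing no added edge $e'_j$ is ever removed, a detail the paper leaves as ``easy to verify.''
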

\begin{proof}
Fix an arbitrary index $i \in [k]$, and define $E'_{(i)} = \{e'_1,\ldots,e'_{i}\}$. 
Recall that the cycle $C_i$ is a subgraph of $S$,
and notice that each edge of $C_i$ that do not belong to $F$ must belong
to $E'_{(i)}$, i.e., $E(C_i) \setminus E_F \subseteq E'_{(i)}$.  
Fact \ref{maxweight} implies that the edge of maximum weight in  $C_i$, denoted $e^*_i$, 
does not belong to $F$,
and so $e^*_i \in E'_{(i)}$.
Since $e'_i$ is the edge of maximum weight in $E'_{(i)}$, 
it holds that $w(e^*_i) \le w(e'_i)$.
Also, as $\tilde e_i$ belongs to $C_i$, we have by definition $w(\tilde e_i) \le w(e^*_i)$. 
Claim \ref{firstc} follows.
\qed
\end{proof}

Denote by $H'' = H \setminus E' = H \setminus (E'_1 \cup E'_2)$ and $F'' = F \setminus (E'_1 \cup \tilde E_2)$ 
 the graphs obtained from $H$ and $F$
by removing all edges of $E' = E'_1 \cup E'_2$ and $E'_1 \cup \tilde E_2$, respectively. By definition, $E(H'') = E''$.
For an edge $e = (u,v)$, denote by $min(e)$ the endpoint of $e$ with smaller radius, i.e.,
$min(e) = u$ if $r(u) < r(v)$, and $min(e) = v$ otherwise. Consider an arbitrary edge $e \in E''$.
Since no edge of $E''$ belongs to the symmetric disk graph $S$,
it follows that $r(min(e)) < w(e)$.      
Also, since the graph $F''$ is a subgraph of $S$, 
the weight of every edge that is incident to $min(e)$ in $F''$ is no greater than $r(min(e))<w(e)$.

Next, we remove some edges of the graphs $H''$ and $F''$ and add them
to the two initially empty edge sets $E^*_H$ and $E^*_F$, respectively.
This is done in the following way.
We initialize $E^*_H = E^*_F = \emptyset$,
and then examine the edges of $E''$ one after another in an arbitrary order. 
For each edge $e \in E''$, we check
whether the vertex $min(e)$ is isolated in $F''$ or not. If $min(e)$ is isolated in $F''$, we leave
$H'',F'',E^*_H$, and $E^*_F$ intact.
Otherwise, at least one edge is incident to $min(e)$ in $F''$.
Let $\tilde e$ be an arbitrary such edge,
and note that $w(\tilde e) \le r(min(e)) < w(e)$.
We remove the edge $e$ from the graph $H''$ and add it to the edge set $E^*_H$, and remove the edge $\tilde e$ from the graph
$F''$ and add it to the edge set $E^*_F$. 
This process is repeated iteratively until all edges of $E''$ have been examined. 
Notice that at each stage of this process, it holds that $|E^*_F| = |E^*_H|, E^*_F \subseteq E_F \setminus E_H, w(E^*_F) \le w(E^*_H)$.

In what follows we consider the graphs $H'',F''$ and edge sets $E^*_H,E^*_F$
resulting at the end of this process.
Define $\tilde E = E'_1 \cup \tilde E_2 \cup E^*_F$.
By construction, we have $\tilde E \subseteq E_F,
F'' = F \setminus \tilde E$.
The following claim completes the proof
of Lemma \ref{key}.
\begin{claim} \label{secondc}
~(1) $w(\tilde E) \le w(H)$.
~~(2) The graph $F'' = F \setminus \tilde E$ contains at least $\frac{1}{5} \cdot n$ isolated vertices.
\end{claim}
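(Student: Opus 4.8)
The plan is to prove the two parts separately, with part (2) being where the constant $\frac{1}{5}$ is extracted by balancing two competing lower bounds on the number of isolated vertices.

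Part (1) is pure bookkeeping. Since $E_H$ is the disjoint union of $E'_1$, $E'_2$ and $E''$, and $\tilde E = E'_1 \cup \tilde E_2 \cup E^*_F$, I would write $w(\tilde E) \le w(E'_1) + w(\tilde E_2) + w(E^*_F)$. Claim \ref{firstc} gives $w(\tilde E_2) \le w(E'_2)$, and the invariant maintained throughout the second construction gives $w(E^*_F) \le w(E^*_H)$; since $E^*_H \subseteq E''$ this yields $w(E^*_F) \le w(E'')$. Summing, $w(\tilde E) \le w(E'_1) + w(E'_2) + w(E'') = w(E_H) = w(H)$.

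For part (2), introduce the sizes $a = |E'_1|$, $b = |E'_2| = |\tilde E_2|$, $d = |E''|$ and $c = |E^*_F| = |E^*_H|$, and set $x = |\tilde E| = a+b+c$ (the three sets are disjoint by construction). Since $H$ is a Hamiltonian path, $|E_H| = n-1$, hence $a + b + d = n-1$; and since exactly $c$ of the $d$ edges of $E''$ trigger a removal, the remaining $d - c$ edges of $E''$ are left intact. The first, \emph{structural} bound comes from these intact edges: for each such $e$ the vertex $min(e)$ was isolated in $F''$ at the moment $e$ was examined, and because the process only ever deletes edges, $min(e)$ stays isolated in the final graph $F'' = F \setminus \tilde E$. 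As $H$ is a path, every vertex is the $min$-endpoint of at most two edges of $E_H$, so $e \mapsto min(e)$ is at most two-to-one, and the $d-c$ intact edges therefore produce at least $\frac{d-c}{2} = \frac{n-1-x}{2}$ distinct isolated vertices.

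The second, \emph{edge-counting} bound exploits that $F''$ is a forest: any nontrivial component on $j \ge 2$ vertices has $j - 1 \ge j/2$ edges, so the number of non-isolated vertices is at most twice the number of edges of $F''$. Since $F$ is a spanning forest we have $|E_F| \le n-1$, and deleting the $x$ edges of $\tilde E$ leaves $|E(F'')| = |E_F| - x \le n-1-x$ edges, giving at least $n - 2(n-1-x) = 2x - n + 2$ isolated vertices. The hard part is that neither estimate suffices on its own: $\frac{n-1-x}{2}$ is large only when few edges were removed (small $x$), while $2x-n+2$ is large only when many were removed (large $x$). The resolution is that the first bound is decreasing in $x$ and the second increasing, and they coincide exactly at $x = \frac{3n}{5} - 1$, where both equal $\frac{n}{5}$; thus for every value of $x$ at least one of them is at least $\frac{n}{5}$. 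This establishes that $F'' = F \setminus \tilde E$ has at least $\frac{1}{5}\cdot n$ isolated vertices, completing Claim \ref{secondc} and hence Lemma \ref{key}.
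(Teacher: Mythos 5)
Your proof is correct and follows essentially the same route as the paper's: part (1) is the same disjointness bookkeeping, and part (2) rests on the same two competing lower bounds on the number of isolated vertices (the edge-counting bound $n - 2|E(F'')|$ and the structural bound $\tfrac{1}{2}|E(H'')|$ together with $|E(H'')| \ge |E(F'')|$). The only difference is presentational: the paper splits into cases according to whether $|E(F'')| \ge \tfrac{2}{5}\cdot n$, whereas you balance the two bounds explicitly in $x = |\tilde E|$; both yield the same crossing point and the same constant $\tfrac{1}{5}$.
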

\begin{proof}
We start with proving the first assertion of the claim.
Define $E = E' \cup E^*_H = E'_1 \cup E'_2 \cup E^*_H$.   
It is easy to see that 
the edge sets $E'_1, E'_2$, and $E^*_H$ are pairwise disjoint subsets of $E_H$, yielding $w(E'_1) + w(E'_2) + w(E^*_H)
= w(E) \le w(H)$.
Recall that $w(\tilde E_2) \le w(E'_2)$
and $w(E^*_F) \le w(E^*_H)$.
Consequently,
\begin{eqnarray*}
w(\tilde E)  ~&=&~ w\left(E'_1 \cup \tilde E_2 \cup E^*_F\right) ~\le~ w(E'_1) + w(\tilde E_2) + w(E^*_F) 
\\ ~&\le&~ w(E'_1) + w(E'_2) + w(E^*_H) ~=~ w(E) ~\le~ w(H).
\end{eqnarray*}
Next, we prove the second assertion of the claim.
Denote by $m_{H}$ (respectively, $m_{F}$) the number $|E(H'')|$ (resp., $|E(F'')|$) of edges in the graph $H''$ (resp., $F''$).
\\Suppose first that $m_F < \frac{2}{5} \cdot n$.
Observe that in any $n$-vertex graph with $m$ edges
there are at least $n- 2m$ isolated vertices. Thus, the number of isolated vertices in $F''$ is
bounded below by $n-2m_F > n - \frac{4}{5} \cdot n = \frac{1}{5} \cdot n$, as required.
\\We henceforth assume that $m_F \ge \frac{2}{5} \cdot n$.
\\Recall that $|\tilde E_2| = |E'_2|, |E^*_F| = |E^*_H|$. It is easy to see that the edge sets
$E'_1,\tilde E_2,E^*_F,E'_2$, and $E^*_H$ are pairwise disjoint, 
and so
$$|\tilde E| ~=~ |E'_1| + |\tilde E_2| + |E^*_F| ~=~ |E'_1| + |E'_2| + |E^*_H| ~=~ |E|.$$
Observe that $H'' = H \setminus E$ and recall that  $F'' = F \setminus \tilde E$,
yielding $|E(H'')|  ~=~ |E_H| - |E|,  |E(F'')| = |E_F| - |\tilde E|$.
Also, note that $|E_H| = n-1 \ge |E_F|$.
Altogether, \begin{eqnarray} \label{omit}
  m_H ~=~ |E(H'')|  ~=~ |E_H| - |E|  ~\ge~
|E_F| - |\tilde E|  ~=~ |E(F'')| ~=~ m_F.
\end{eqnarray}
Next, observe that for each edge $e$ in $H''$, the vertex $min(e)$ is
isolated in $F''$. By definition, for any pair $e,e'$ of non-incident edges in $H''$, $min(e) \ne min(e')$.
Since the graph $H''$ is cycle-free and the maximum degree of a vertex in $H''$ is at most two,
it follows that the number $m_H$ of edges in $H''$ is at most twice greater than the
number of isolated vertices in $F''$.
Thus, 
the number of isolated vertices in
 $F''$ is bounded below by $\frac{1}{2} \cdot {m_H} \ge \frac{1}{2} \cdot {m_F} \ge \frac{1}{5} \cdot n$. (The first inequality
 follows from (\ref{omit}) whereas the second inequality follows from the above assumption.)
\\ This completes the proof of claim \ref{secondc}.
\qed
\end{proof}
Lemma \ref{key} follows. \qed
\end{proof}

Next, we employ Lemma \ref{key} inductively to upper bound the weight of SDGs in terms of the weight
of the minimum-weight Hamiltonian path of the metric. The desired upper bound of $O(\log n)$ on the
weight-coefficient
of SDGs for arbitrary $n$-point metrics would immediately follow.
\begin{lemma}
Let $M=(V,\delta)$ be an $n$-point metric and let $r: V \rightarrow \mathbb R^+$ be a range assignment.
Also, let $F = (V,E_F)$ be the MSF of 
the symmetric disk graph 
$S = SDG(M,r)$
and let $H = (V,E_{H})$ be a minimum-weight Hamiltonian path of $M$.
Then $w(F) \le \log_{\frac{5}{4}} n \cdot w(H)$.
\end{lemma}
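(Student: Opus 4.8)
The plan is to prove the statement by strong induction on $n = |V|$, using Lemma \ref{key} to peel off a constant fraction of the vertices at each step. The base case $n=1$ is immediate, since then $F$ is edgeless and $\log_{\frac{5}{4}} 1 = 0$. For the inductive step I would first invoke Lemma \ref{key} to obtain an edge set $\tilde E \subseteq E_F$ with $w(\tilde E) \le w(H)$ such that $F \setminus \tilde E$ contains at least $\frac{1}{5} n$ isolated vertices. Writing $V_1$ for the set of non-isolated vertices of $F \setminus \tilde E$, we have $|V_1| \le \frac{4}{5} n$. Since $w(F) = w(\tilde E) + w(F \setminus \tilde E) \le w(H) + w(F \setminus \tilde E)$, it suffices to control $w(F \setminus \tilde E)$ by applying the induction hypothesis to the point set $V_1$.

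The crux is to relate $F \setminus \tilde E$, which is only a subforest of $F$ and in general not maximal, to the MSF of the SDG on the smaller metric $M_1 = (V_1,\delta)$. Note that $SDG(M_1, r) = S[V_1]$, the subgraph of $S$ induced on $V_1$. I claim that every edge of $F$ whose endpoints both lie in $V_1$ already belongs to $MSF(S[V_1])$. Indeed, by the cycle characterization underlying Fact \ref{maxweight} (and using that all weights are distinct), an edge $e=(u,v) \in S$ lies in $MSF(S)$ if and only if there is no $u$-$v$ path in $S$ consisting solely of edges lighter than $e$; passing to the induced subgraph $S[V_1]$ can only delete such paths, so the same edge must lie in $MSF(S[V_1])$. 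As every edge of $F \setminus \tilde E$ has both endpoints in $V_1$, it follows that $F \setminus \tilde E$ is a subgraph of $MSF(S[V_1])$, and hence $w(F \setminus \tilde E) \le w\bigl(MSF(S[V_1])\bigr)$.

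It then remains to feed this into the induction. The metric $M_1$ has at most $\frac{4}{5} n$ points, and its minimum-weight Hamiltonian path $H_1$ satisfies $w(H_1) \le w(H)$: deleting from $H$ the vertices outside $V_1$ and short-cutting via the triangle inequality produces a Hamiltonian path of $M_1$ of weight at most $w(H)$. Applying the induction hypothesis to $M_1$ with the restricted range assignment yields $w\bigl(MSF(S[V_1])\bigr) \le \log_{\frac{5}{4}} |V_1| \cdot w(H_1) \le \log_{\frac{5}{4}}\bigl(\tfrac{4}{5} n\bigr) \cdot w(H)$. Combining the bounds and using $\log_{\frac{5}{4}}\bigl(\tfrac{4}{5} n\bigr) = \log_{\frac{5}{4}} n - 1$ gives
$$w(F) \le w(H) + \log_{\frac{5}{4}}\bigl(\tfrac{4}{5} n\bigr) \cdot w(H) = \log_{\frac{5}{4}} n \cdot w(H),$$
as required; the recursion is well-founded because at least one vertex is always isolated, so $|V_1| < n$.

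The main obstacle is the containment step of the second paragraph: restricting an MSF to a vertex subset need not yield a maximal forest of the induced subgraph, so one must argue through the cycle rule (hence through the distinct-weights assumption) that the restriction is nonetheless \emph{contained} in $MSF(S[V_1])$, which is what lets the induction close. The only other place metric structure enters is the short-cutting estimate $w(H_1) \le w(H)$, which relies on the triangle inequality; this is precisely why the second lemma is phrased for metrics even though Lemma \ref{key} itself holds for arbitrary traceable graphs.
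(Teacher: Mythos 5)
Your proof is correct and follows essentially the same route as the paper: strong induction driven by Lemma \ref{key}, recursing on the sub-metric induced by the non-isolated vertices, using that the induced SDG equals $S[V_1]$ and that shortcutting $H$ via the triangle inequality gives $w(H_1)\le w(H)$. The only local difference is that you justify $w(F\setminus\tilde E)\le w\bigl(MSF(S[V_1])\bigr)$ by the cycle-property containment $F\setminus\tilde E\subseteq MSF(S[V_1])$, whereas the paper derives the same inequality by an edge-exchange argument (and treats $n\le 4$ as an explicit base case, which your version absorbs into the inductive step); both are valid.
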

\begin{proof}
The proof is by induction on the number $n$ of points in the metric $M$.
\\\emph{Basis: $n \le 4$.}
The case $n=1$ is trivial. 
Suppose next that $2 \le n \le 4$. In this case $\log_{\frac{5}{4}} n \ge \log_{\frac{5}{4}} 2 > 3$. Also, the MSF $F$ of $S$ contains at most 3 edges.
By the triangle inequality, the weight of each edge of $F$ is bounded above by the weight $w(H)$ of the Hamiltonian path $H$.
Hence, $w(F) \le 3 \cdot w(H) < \log_{\frac{5}{4}} n \cdot w(H)$.
\\\emph{Induction step:} We assume that the statement holds for all smaller values of $n$, $n \ge 5$,
and prove it for $n$. By Lemma \ref{key}, there is an edge set $\tilde E \subseteq E_F$ of weight
at most $w(H)$, such that the set $I$ of isolated vertices in the graph $F \setminus \tilde E$ satisfies
$|I| \ge \frac{1}{5} \cdot n$. Consider the complementary edge set $\hat E = E_F \setminus \tilde E$ of edges
in $F$. Observe that no edge of $\hat E$ is incident to a vertex of $I$. Denote by $\hat M$ the
sub-metric of $M$ induced by the point set of $\hat V = V \setminus I$, 
let $\hat S = SDG(\hat M,r)$ 
be the SDG corresponding to $\hat M$ and the original range assignment $r$,
and let $\hat F = (\hat V,E_{\hat F})$ be the MSF of $\hat S$. 
Notice that the induced subgraph of $S$ over the vertex set $\hat V$ is equal to $\hat S$,
and so $\hat E$ is a subgraph of $\hat S$.
Since $\hat F$ is a spanning forest of $\hat S$, replacing the edge set $\hat E$
of $F$ by the edge set $E_{\hat F}$ does not affect the connectivity of the graph,
i.e., 
the graph $\bar F = F \setminus \hat E \cup E_{\hat F}$ 
that is obtained from $F$ by removing the 
edge set $\hat E$ and adding the edge set $E_{\hat F}$ has
exactly the same connected components as $F$.
Thus, by breaking all cycles in the graph $\bar F$, we get a spanning forest 
of $S$. The weight of this spanning forest is bounded above by the weight $w(\bar F) = w\left(F \setminus \hat E \cup E_{\hat F}\right)$ 
of the graph $\bar F$, and is bounded below
by the weight $w(F)$ of the MSF $F$ of $S$.
It follows that
$w(\hat E) \le w(E_{\hat F}) = w(\hat F)$. Write $\hat n = |\hat V|$, and let $\hat H = (\hat V,E_{\hat H})$ be a
minimum-weight Hamiltonian path of $\hat M$. 
Since $|I| \ge \frac{1}{5} \cdot n$, we have $$\hat n ~=~ |\hat V| ~=~ |V \setminus I| ~\le~ \frac{4}{5} \cdot n ~\le~ n-1.$$
(The last inequality holds for $n \ge 5$.)
By the induction hypothesis for $\hat n$, $w(\hat F) \le \log_{\frac{5}{4}} \hat n \cdot w(\hat H)$.
Also, the triangle inequality implies that $w(\hat H) \le w(H)$.
Hence, 
\begin{eqnarray*}
\nonumber w(\hat E) ~&\le&~ w(\hat F) ~\le~ \log_{\frac{5}{4}} \hat n \cdot w(\hat H) ~\le~ \log_{\frac{5}{4}} \left(\frac{4}{5} \cdot n\right) \cdot w(H)
\\ ~&=&~ \log_{\frac{5}{4}} n \cdot w(H) - w(H).
\end{eqnarray*}
We conclude that
\begin{eqnarray*}
w(F) ~&=&~ w(E_F) ~=~ w(\tilde E) + w(E_F \setminus \tilde E) ~=~ w(\tilde E) + w(\hat E)
\\ ~&\le&~ w(H) + \log_{\frac{5}{4}} n \cdot w(H) - w(H) ~=~ \log_{\frac{5}{4}} n \cdot w(H). 
\end{eqnarray*}
\qed
\end{proof}

By the triangle inequality, the weight of the minimum-weight Hamiltonian path of any metric 
is at most twice greater than the weight of the MST for that metric. We derive the main result of this paper.
\begin{theorem}\label{final}
For any $n$-point metric $M=(V,\delta)$ and any range assignment $r: V \rightarrow \mathbb R^+$,
$w(MSF(SDG(M,r))) = O(\log n) \cdot w(MST(M))$.
\end{theorem}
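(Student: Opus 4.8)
The plan is to reduce the theorem to the inductive lemma proved just above, so that essentially no new combinatorial work is needed. Write $F = MSF(SDG(M,r))$ and let $H = (V,E_H)$ be a minimum-weight Hamiltonian path of $M$. The preceding lemma already establishes $w(F) \le \log_{\frac{5}{4}} n \cdot w(H)$. Hence the only thing that remains is to bound $w(H)$ in terms of $w(MST(M))$, and then to absorb the constant into the $O(\cdot)$ notation.

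First I would prove the standard inequality $w(H) \le 2 \cdot w(MST(M))$. To do this I would take $MST(M)$, duplicate each of its edges to form an Eulerian multigraph of total weight $2 \cdot w(MST(M))$, and fix an Euler tour $T$ of this multigraph (so $w(T) = 2 \cdot w(MST(M))$). Traversing $T$ and shortcutting past vertices that have already been visited produces a Hamiltonian cycle, and deleting any one of its edges leaves a Hamiltonian path; by the triangle inequality each shortcut can only decrease the weight, so the resulting path has weight at most $w(T) = 2 \cdot w(MST(M))$. Since $H$ is a \emph{minimum}-weight Hamiltonian path, we conclude $w(H) \le 2 \cdot w(MST(M))$.

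Chaining the two bounds then finishes the argument: $w(MSF(SDG(M,r))) = w(F) \le \log_{\frac{5}{4}} n \cdot w(H) \le 2 \log_{\frac{5}{4}} n \cdot w(MST(M)) = O(\log n) \cdot w(MST(M))$, as claimed.

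I do not expect a genuine obstacle here, since all of the difficulty has already been packaged into Lemma \ref{key} and the inductive lemma derived from it. The one place that requires care is the shortcutting step, where the triangle inequality must be invoked correctly — and this is precisely the point at which \emph{metricity} of $M$ is used, as opposed to the weaker ``traceable graph'' hypothesis under which Lemma \ref{key} alone is valid. It is worth noting in passing that the bound obtained this way is the explicit constant $2 \log_{\frac{5}{4}} n$ advertised in the introduction.
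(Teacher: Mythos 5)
Your proposal is correct and follows the paper's own route exactly: the paper also derives Theorem \ref{final} by combining the inductive lemma's bound $w(F) \le \log_{\frac{5}{4}} n \cdot w(H)$ with the standard triangle-inequality fact that the minimum-weight Hamiltonian path satisfies $w(H) \le 2 \cdot w(MST(M))$. The only difference is that you spell out the doubling/Euler-tour/shortcutting proof of that standard fact, which the paper states without proof.
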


\section{The Range Assignment Problem} \label{sec5}
In this section we demonstrate that for any metric, the cost of an optimal range assignment with bounds on the ranges 
is greater by at most a logarithmic factor than the cost of an optimal range assignment without such bounds. This result follows as a simple corollary
of the upper bound given in Theorem \ref{final}.


Let $M = (V,\delta)$ be an $n$-point metric, 
and assume that the $n$ points of $V$, denoted by $v_1,v_2,\ldots,v_n$, represent transceivers.
Also, let $r': V \rightarrow \mathbb R^+$ be a function that provides a maximum transmission range for each of the points of $V$.
In the \emph{bounded range assignment problem} 
the objective is to compute a range assignment $r: V \rightarrow \mathbb R^+$, such that
(i) for each point $v_i \in V$, $r(v_i) \le r'(v_i)$, (ii) the induced SDG (using the ranges
$r(v_1),r(v_2),\ldots,r(v_n)$), namely $SDG(M,r)$, is connected, and (iii) $\sum_{i=1}^n r(v_i)$ is minimized. 
In the \emph{unbounded range assignment problem} the maximum transmission range for each of the points of $V$
is unbounded, i.e., $r'(v_i) = Diam(M)$, for each point $v_i \in V$. 
The function $r'$ is called a \emph{bounding function}.   
Also, the sum $\sum_{i=1}^n r(v_i)$ is called the \emph{cost} of the range assignment $r$, and is denoted by $COST(r)$. 

Fix an arbitrary bounding function $r': V \rightarrow \mathbb R^+$.
Denote by $OPT(M,r')$ the cost of an optimal 
solution for the
bounded range assignment problem corresponding to $M$ and $r'$.
Also, denote by $OPT(M)$ the
cost of an optimal 
solution for the unbounded range assignment problem corresponding to $M$.
Clearly, $OPT(M,r') \ge OPT(M)$.
Next, we show that $OPT(M,r') = O(\log n) \cdot OPT(M)$.


Let $SDG(M,r')$ be the SDG corresponding to $M$ and $r'$,
and let $T$ be the MST of $SDG(M,r')$. We define $r$ to be the range assignment that assigns $r(v_i)$ with the weight of the heaviest
edge incident to $v_i$ in $T$, for each point $v_i \in V$. By construction, $r(v_i) \le r'(v_i)$, for each point $v_i \in V$.
Also, notice that the SDG corresponding to $M$ and $r$, namely $SDG(M,r)$, contains $T$ and in thus connected. 
Hence, the range assignment $r$ provides a feasible solution for the bounded range assignment problem corresponding to $M$ and $r'$, yielding
$OPT(M,r') \le COST(r)$. By a double counting argument, it follows that $\sum_{i=1}^n r(v_i) ~\le~ 2 \cdot w(T)$.
Also, by Theorem \ref{final}, $w(T) = w(MST(SDG(M,r'))) = O(\log n) \cdot w(MST(M))$.
Finally, it is easy to verify that $w(MST(M)) \le OPT(M)$.
Altogether,
\begin{eqnarray*}
\nonumber OPT(M,r') ~&\le&~ COST(r) ~=~ \sum_{i=1}^n r(v_i) ~\le~ 2 \cdot w(T) 
\\ ~&=&~ O(\log n) \cdot w(MST(M))
~=~ O(\log n) \cdot OPT(M).
\end{eqnarray*}






\section{Negative Results} \label{sec4}
\subsection{The MST of SDGs in General Graphs is Heavy} \label{appa}
In this section we show that the upper bound of Theorem \ref{final}
on the weight-coefficient of SDGs
does not extend to general (undirected) weighted graphs. 

Let $W$ be an arbitrary large number. 

Suppose first that the weight function of the graph does not satisfy the triangle inequality.
Let $C_3$ be the 3-cycle graph on the vertex set $\{a,b,c\}$ in which the edge $(a,b)$ has unit weight, the edge
$(a,c)$ has weight 2, and the edge $(b,c)$ has weight $W$. 
Observe that the SDG corresponding to $C_3$ and the range assignment $r$ that assigns $r(a) = 1, r(b) = r(c) = W$, denoted $SDG(C_3,r)$,
contains the two edges $(a,b)$ and $(b,c)$ but does not contain the edge $(a,c)$.
Thus, $SDG(C_3,r)$ is a spanning tree of $C_3$ of weight $W+1$. On the other hand,
the MST of the original graph $C_3$ consists of the two edges $(a,b)$ and $(a,c)$ and has weight 3. 
It follows that the weight-coefficient of $SDG(C_3,r)$ with respect to $C_3$ is $\Omega(W)$.
In other words, the weight-coefficient of SDGs can be arbitrary large in general.

When the weight function of the graph satisfies the triangle inequality, the weight of each edge
in the MST (or the MSF) of the SDG is bounded above by the weight of the entire MST of the original graph.
Hence, the weight-coefficient of SDGs in this case is at most linear in the number of vertices in the graph. 
\\Next, we provide a matching lower bound that applies to (non-complete) 1-dimensional Euclidean graphs.
Let $\epsilon > 0$ be a tiny number, and consider the set $V$ of $n$ points 
$s,u_1,u_2,u_3,\ldots,u_{n-2},t$ that lie on the $x$-axis
with coordinates $0,1, 1+\epsilon,1+2\epsilon,\ldots,1+(n-3) \cdot \epsilon,W+1$, respectively, where $W \gg n$ and $\epsilon \ll \frac{1}{n}$. 
Define $U = V \setminus \{s,t\} =  \{u_1,u_2,\ldots,u_{n-2}\}$,     
and observe that the Euclidean distance
between $s$ (respectively, $t$) and each point of $U \setminus \{u_1\}$ is slightly larger than 1 (resp., slightly smaller than $W$),
whereas the Euclidean distance between $s$ (respectively, $t$) and $u_1$ is equal to 1 (resp., $W$).
Let $G = (V,E,\|\cdot\|)$ be the Euclidean graph over the point set $V$, where 
$E = \{(s,u_i) ~\vert~ u_i \in U\}
 \cup \{(u_i,t) ~\vert~ u_i \in U\}$. (See Figure \ref{1dim} for an illustration.)
\begin{figure*}[htp]
\begin{center}
\begin{minipage}{\textwidth} 
\begin{center}
\setlength{\epsfxsize}{6.7in} \epsfbox{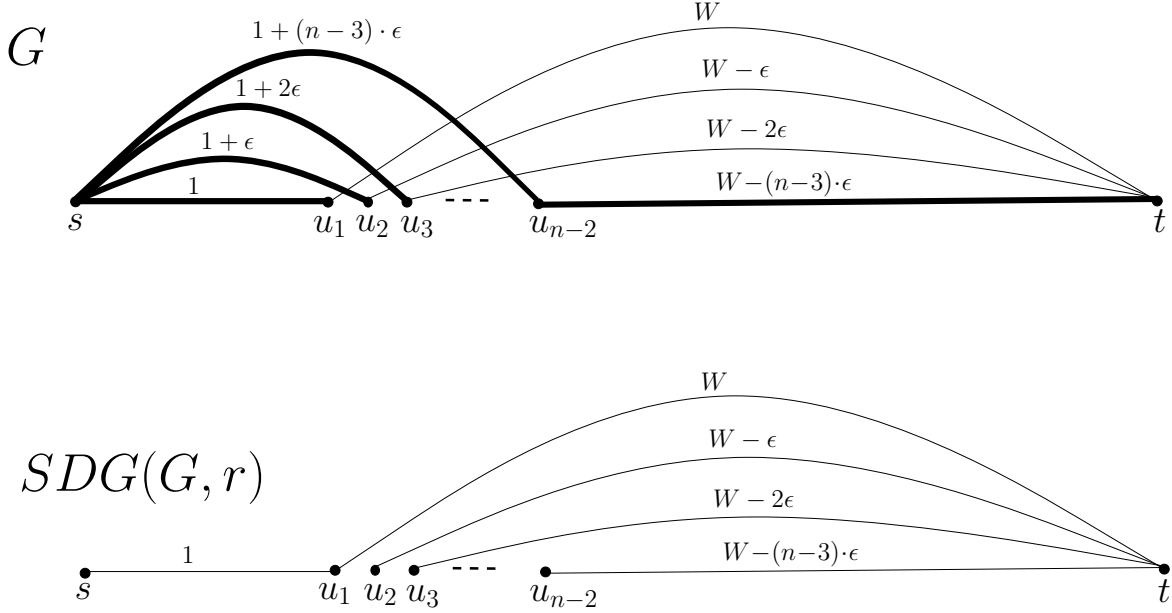}
\end{center}
\end{minipage}
\caption[]{ \label{1dim} \sf \footnotesize The graph $G$ that appears at the top of the figure is a 1-dimensional Euclidean  $n$-vertex graph
that contains $2n-4$ edges. The $n-1$ edges of the MST of $G$ are depicted by bold lines.
The graph $SDG(G,r)$ that appears at the bottom of the figure
is the SDG corresponding to $G$ and $r$, where $r(s)=1$, $r(u_1) = r(u_2) = \ldots = r(u_{n-2}) = r(t) = W$.}
\end{center}
\end{figure*}
Observe that the MST of $G$ contains all $n-2$ edges $\{(s,u_i) ~\vert~ u_i \in U\}$
that are incident to $s$ in $G$ and another edge $(u_{n-2},t)$ that connects $t$ with its closest neighbor $u_{n-2}$ in $G$.
Thus, $w(MST(G)) \approx (n-2) + W$. Consider the range assignment $r$ that assigns $r(x) = W$, for any point $x \in V \setminus \{s\}$,
and $r(s) = 1$.
The SDG corresponding to $G$ and $r$, denoted $SDG(G,r)$, contains all $n-2$ edges of $G$ that are incident to $t$ and another
edge $(s,u_1)$ that connects $s$ with its closest neighbor $u_1$ in $G$. 
In particular, $SDG(G,r)$ is a spanning tree for $G$ of weight roughly $(n-2)\cdot W+1$.
Thus, the weight-coefficient of $SDG(G,r)$ with respect to $G$ is roughly $\frac{(n-2) \cdot W + 1}{(n-2) + W} \approx n-2$,
assuming $W \gg n$ and $\epsilon \ll \frac{1}{n}$. In other words, we obtained a lower bound of $\Omega(n)$ on the weight-coefficient of SDGs for 1-dimensional Euclidean $n$-vertex graphs.

\subsection{The Weight is Exponentially Better than Other Parameters} \label{appb}
Our main result (Theorem \ref{final}) implies that the weight-coefficient of SDGs for arbitrary $n$-point metrics is $O(\log n)$.
In contrast, we showed (see Section 1.1) that the degree-coefficient of SDGs for $n$-point metrics can be as large as $\Omega(n)$. 
(The formal definition of degree-coefficient is given below.)
In this section we demonstrate that, similarly to the degree parameter, a lower bound of $\Omega(n)$ 
also applies to other basic parameters of spanning trees.


Let $T = (T,rt)$ be a (possibly weighted) rooted tree.
Before we proceed, we provide the formal definitions of some basic parameters of  trees:
\begin{itemize}
\item The \emph{maximum degree} (or shortly, \emph{degree}) of $T$  is the maximum number of edges that are incident
to some vertex in $T$. 
\item The \emph{radius} (respectively, \emph{depth}) 
of $T$ is the maximum weighted (resp., unweighted) distance between
$rt$ and some leaf in $T$. 
\item The \emph{diameter} (respectively, \emph{hop-diameter})
of $T$ is the maximum weighted (resp., unweighted) distance between some pair of vertices in $T$.
\item The \emph{sum of all pairwise distances} (or shortly, \emph{sum-pairwise}) of $T$ is defined as the 
sum of weighted distances between
all pairs of vertices in $T$.
\item 
The \emph{sum of all single-source distances} (or shortly, \emph{sum-single}) of $T$ 
is 
defined as the sum of weighted distances between $rt$ and all other vertices in $T$.
\end{itemize}
(The sum-pairwise and sum-single parameters also have unweighted versions.)

Let $G$ be a (possibly weighted) graph and let $G'$ be a connected spanning subgraph of $G$.
The \emph{degree-coefficient} of $G'$ with respect to $G$
is defined as the ratio between the degree of the minimum-degree 
spanning tree of $G'$ and the degree of the minimum-degree spanning tree of $G$.
In exactly the same way we can define the \emph{radius-coefficient}, \emph{depth-coefficient},
\emph{diameter-coefficient}, \emph{hop-diameter-coefficient}, \emph{sum-pairwise-coefficient},
 and \emph{sum-single-coefficient}.

In Section 1.1 we showed that the degree-coefficient of SDGs can be as large as $\Omega(n)$ for $n$-point metrics.
Next, we provide a simple example for which each of the other parameters defined above incurs the same lower bound of $\Omega(n)$. 

Consider the $n$-point metric $M = (V,\delta)$, where $V = \{v_1,v_2,\ldots,v_n\}$, and the
distance function $\delta$ satisfies that $\delta(v_1,v_2) = \delta(v_2,v_3) = \ldots = \delta(v_{n-1},v_n) = 1$,
and for all other pairs  of points $v_i, v_j \in V$, $\delta(v_i,v_j) = 2$.   
\\Let $SDG(M,r)$ be the SDG corresponding to $M$ and the range assignment $r \equiv 1$. It is easy
to see that $SDG(M,r)$ is the unweighted $n$-path $(v_1,v_2,\ldots,v_n)$. 

Let $T = (T,v_1)$ be the spanning tree of $M$ rooted at $v_1$ that consists of the $n-1$ edges $(v_1,v_2),(v_1,v_3),\ldots,(v_1,v_n)$.
Thus, $T$ is the $n$-star graph rooted at $v_1$ in which the weight of the edge $(v_1,v_2)$ is equal to $1$,
and all other edge weights are equal to 2.
Notice that $SDG(M,r)$ is already a spanning tree of $M$. Denote by $S = (S,v_1)$ the tree $SDG(M,r)$ rooted at $v_1$.
It is easy to see that:
\begin{itemize}
\item Both the radius and depth of $T$ are $O(1)$,
whereas the corresponding measures of $S$ are $\Omega(n)$. Thus, both the radius-coefficient
and depth-coefficient of $SDG(M,r)$ with respect to $M$ are $\Omega(n)$
\item Both the diameter and hop-diameter of $T$ are $O(1)$,
whereas the corresponding measures of $S$ are $\Omega(n)$. Thus, both the diameter-coefficient and
hop-diameter-coefficient of $SDG(M,r)$ with respect to $M$ are $\Omega(n)$.
\item The sum-pairwise of $T$ is $O(n^2)$
whereas the sum-pairwise of $S$ is $\Omega(n^3)$. Thus, the
sum-pairwise-coefficient of $SDG(M,r)$ with respect to $M$ is $\Omega(n)$.
\item The sum-single of $T$ is $O(n)$
whereas the sum-single of $S$ is $\Omega(n^2)$. Thus, the
sum-single-coefficient of $SDG(M,r)$ with respect to $M$ is $\Omega(n)$.
\end{itemize}
\section*{Acknowledgments}
The author thanks Michael Elkin for
helpful discussions.
	







\end{document}